\def\be{\begin{equation}}
\def\ee{\end{equation}}
\def\ba{\begin{array}{c}}
\def\ea{\end{array}}
\def\ben{$$}
\def\een{$$}
\newcommand{\bea}{\begin{eqnarray}}
\newcommand{\eea}{\end{eqnarray}}
\newtheorem{thm}{Theorem}
\newtheorem{lemma}[thm]{Lemma}
\newenvironment{proof}{\noindent
 {\bf Proof.}}{\hfill$\square$\vspace{3mm}\endtrivlist}
\begin{document}

\begin{center}

{\Large

Unitarity corridors to exceptional points

}

\vspace{0.8cm}

  {\bf Miloslav Znojil}

\vspace{0.2cm}

\vspace{1mm} Nuclear Physics Institute of the CAS, Hlavn\'{\i} 130,
250 68 \v{R}e\v{z}, Czech Republic

{e-mail: znojil@ujf.cas.cz}

\end{center}

\section*{Abstract}

Phenomenological quantum Hamiltonians
$H^{(N)}(\lambda)=J^{(N)}+\lambda\,V^{(N)}(\lambda)$ representing a
general real $N^2-$parametric perturbation of an
exceptional-point-related unperturbed Jordan-block Hamiltonian
$J^{(N)}$ are considered. Tractable as non-Hermitian (in a
preselected, unphysical Hilbert space) as well as, simultaneously,
Hermitian  (in another, ``physical'' Hilbert space) these matrices
may represent a unitary, closed quantum system if and only if the
spectrum is real. At small $\lambda$ we show that the parameters are
then confined to a ``stability corridor'' ${\cal S}$ of the $\lambda
\to 0$ access to the extreme dynamical exceptional-point regime. The
corridors are $N-$dependent and narrow: They are formed by a
non-empty subset of unitarity-compatible multiscale perturbations
such that $\lambda\,V^{(N)}_{j+k,j}(\lambda) ={\cal
O}(\lambda^{(k+1)/2})\,$ at $k=1,2,\ldots,N-1\,$ and all $j$.

\subsection*{Keywords}

unitary quantum systems; perturbation theory; exceptional points;
admissible non-Hermitian Hamiltonians;
realizable perturbations;
boundaries of stability;

\newpage


\section{Introduction\label{troduction}}

One of the most characteristic distinguishing features of many
innovative {\em
non-Hermitian\,}  (e.g., ${\cal PT}-$symmetric \cite{Carl})
representations $H \neq H^\dagger$ of quantum
Hamiltonians is that they can vary with parameters which
are {\em allowed to reach\,} the Kato's exceptional-point values
(EPs, \cite{Kato}). The phenomenological appeal of such a limiting
transition $g\to g^{EP}$ in $H(g)$ is currently being discovered in
a broad range of open quantum systems \cite{Berry,Uwe,Nimrod} as
well as in many less known applications of the theory to various
closed quantum systems \cite{Dyson,Geyer,BG,BB,aliKG}. In the
former, open-system setting the spectrum of $H(g)$ is, in general, complex.
The $H(g)-$generated quantum time-evolution is
non-unitary. This gives rise to a number of rather unexpected and interesting
time-evolution patterns (for example, at $g= g^{EP}$ one could stop the
light \cite{stop}) which mainly attracted attention among
experimentalists \cite{Muslimani,Heiss,Cart,Carlbook}.

In the latter,
closed-system-oriented research,
in contrast,
the mainstream efforts are currently being concentrated upon
the study of many fundamental,
not yet fully resolved theoretical questions \cite{book}.
One of the most important ones concerns
the very relevance of the spectrum. Indeed,
under small perturbations, ``the location of the eigenvalues may be
\ldots fragile'' \cite{Trefethen} so that
people started believing that also
``in quantum mechanics
with non-Hermitian operators
\ldots a
central role'' is to be given to ``the mathematical concept
of the pseudospectrum'' \cite{Viola}.

Our present message is in fact mainly inspired by
the necessity of a critical comment on
the latter claims.
The point is that
the very definition of the ``smallness'' of perturbation $\lambda\,V$
only carries a well-defined physical
meaning
in
the mathematical descriptions of
non-unitary {\it alias\,} open quantum systems.
The claims of ``fragility'' are then firmly
based on the rigorous
Roch-Silberman theorem \cite{[64]}
``relating the pseudospectra to the stability of the spectrum under small
perturbations'' \cite{Viola}.
The use of pseudospectra related to the
perturbations with bounded norm $||V||={\cal O}(1)$ and with a small
coupling $\lambda < \epsilon$
then results, naturally, in the observation of
many ``unexpected wild properties of operators familiar
from ${\cal PT}-$symmetric quantum mechanics''
(cited, again, from \cite{Viola}).

All such claims are mathematically correct of course.
It is only necessary to add that
they exclusively apply to the open quantum systems.
In the case of closed quantum systems
the relationship between mathematics and physics is more subtle.
We are initially introducing our Hamiltonians
$H$ as non-Hermitian in
a conventional Hilbert space
(in our comprehensive review \cite{SIGMA}
we proposed to denote this space by
dedicated symbol ${\cal H}^{(F)}_{(friendly)}$).
In this space the norm $||V||$ and pseudospectra
are
defined \cite{Trefethen}.
Naturally,
as long as $H \neq H^\dagger$, such a space
has to be reclassified as auxiliary and manifestly
unphysical.
As a consequence,
it is necessary to
construct another, amended,
phenomenologically relevant
norm.
Only such a norm
can be used
in the formulations of
testable physical predictions concerning the
closed quantum systems
\cite{ali}.

In what follows we intend to
contribute to the clarification of the misunderstanding.
By means of a detailed analysis
of a few schematic examples
we intend
to
demonstrate that
one must treat the concept of a
`sufficiently small perturbation''
(entering also the definition of pseudospectrum)
with extreme care.
We will remind the readers that
in quantum mechanics
of unitary systems using observables in a non-Hermitian
representation
\cite{ali}
the weight of a perturbation is {\em not\,}
measured by its norm in
${\cal H}^{(F)}_{(friendly)}$.
By explicit constructive calculations we will clarify why
it must be measured by the norm
in another, {\em physical}, unitarily non-equivalent
Hilbert space of states with
standard probabilistic interpretation
(denoted, say,
by symbol
${\cal H}^{(S)}_{(standard)}$
of
Table 2 in~\cite{SIGMA}).

The difference between the two norms
increases when we get closer to the EP boundary of the
``admissible'' (i.e., unitarity-compatible) domain of parameters.
For this reason we
found it maximally instructive to
restrict attention of our readers
just to the systems
living in a small vicinity
of one of their EP singularities.
This enabled us to
make our message compact and persuasive.
Indeed,
whenever the system moves closer to
its EP boundary,
the inner-product-related
anisotropy of
geometry of the associated
physical Hilbert space
${\cal H}^{(S)}_{(standard)}$
grows and approaches its non-Hermitian-degeneracy
supremum
(cf. \cite{lotoreichik}).

We will show that and how this induces a
``hierarchization'' of the
weights of the influence of the separate components of the
fluctuations of the separate matrix elements of the Hamiltonian.
Indeed, even if we keep calling these fluctuations ``perturbations'',
we must include also their anisotropy-dependence fully into account.
Due to our choice of not too complicated illustrative examples
we will be able to
simplify some technicalities significantly.
The presentation of our results will start, in
section \ref{seca}, by a concise
explanation of the situation in which
the vicinity of the EP singularity can be connected,
by a continuous change of the parameters, with the
bulk parametric domain of a
less anomalous dynamical regime of the system.

Conveniently,
the admissible, unitarity-compatible parametric domain near
an EP
will be called
``corridor''. By definition, the energies inside the corridor
will be required
real.
The concept of the corridor connecting a stable unitary dynamical
regime with its limiting EP boundary is given a more concrete form
in section \ref{secb}. We recall and extend there a few constructive
results of our preceding papers \cite{admissiblea,admissible}. We also
reconfirm there that under the quite common \cite{Trefethen}
but not sufficiently
restrictive assumption that the perturbations are uniformly
bounded,  the vicinity of generic EP-limiting $H$s
{\em does not contain any\,}
``broad'' corridors at all.

The apparent paradox is resolved in section \ref{secc} where we
introduce a concept of a ``narrow'' corridor for which the
``sufficiently small'' perturbations are newly defined via a certain
{\it ad hoc\,} redefinition of the space of variability of the
``admissible'' matrix elements of perturbation $V$. Explicit
formulae for the boundaries of the corridors are presented there at
the first few matrix dimensions $N$.
The subsequent more general and $N-$independent results
will be then presented in section
\ref{secd}. In a way based on an
extrapolation of the preceding $N-$dependent
observations to all $N$ we will formulate there our main result.

This will only explicitly reconfirm our {\it a priori\,} expectations
that in the non-Hermitian closed-system theories the basic phenomenological
concept of the ``smallness'' of the stability-compatible
perturbations $V$ must be specified in a far from trivial manner.
In our last, less technical discussion in section \ref{discussion} we
will finally complement this conclusion
by a few
comments on its consequences and interpretation.

\section{Unitarity corridors\label{seca}}

In a way reflecting the recent trends \cite{MZbook} we intend
to perform a deeper analysis of the mathematical guarantees
of the reality of the spectrum attributed, often, to the
spontaneously unbroken ${\cal PT}-$symmetry of $H$ \cite{Carl}, or
to the existence of a similarity between $H$ and a self-adjoint
operator \cite{Dyson,Geyer}. Such a project led us to the search for
correspondence of the underlying mathematics with the parallel
conceptual physical questions concerning, first of all, the
protection of a quantum system against the loss of its observability
under too strong a perturbation.

\subsection{The boundaries of observability}

In the literature devoted to the analyses of quantum stability one
mostly finds just various entirely routine descriptions which mainly
fall into two
subcategories. In the more common approach one simply assumes that
both the unperturbed and perturbed Hamiltonians are self-adjoint.
This, in essence, makes the problem trivial. Indeed, the reality of
the bound state energies remains ``robust''. One also does not need
to pay too much attention to the EP singular values $g^{(EP)}$ of
parameters because they are, by definition, out of consideration,
incompatible simply with the self-adjointness assumption
\cite{Kato}.

In the conventional Hermitian theories the influence of small
perturbations
is described by the
pseudospectrum
and it
remains fully under our control. In the
open-system theories the study of pseudospectra clarifies a number
of features of various realistic systems. {\it Pars pro
toto\,} we may name the study of perturbations of the
Bose-Hubbard $N$-by-$N$-matrix
forms of Hamiltonians $H^{(N)}(g)$  \cite{Uwe}. In this case the
non-Hermitian formalism of perturbation expansions helped to
clarify even some aspects of the behavior of the Bose-Einstein
condensates. Another particularly impressive result of this type was
a quite unexpected discovery of the generic failure of adiabatic
approximation in the open, non-unitary quantum dynamical systems when forced to
encircle their EP singularity \cite{Doppler}.

The problems are much more challenging in the case of the closed
quantum systems, especially in the models in which the Kato's
EP singularity is of the $N-$th order with $N>2$ (in this case
we shall usually use the acronym EPN). Indeed, after an arbitrarily
small perturbation
the initially strictly non-diagonalizable EPN-related Hamiltonians
$H^{(N)}(g^{(EPN)})$ cannot be assigned their canonical Jordan-block
form anymore (i.e., they become diagonalizable). At the same time,
the brute-force numerical digonalization of these perturbed
Hamiltonians
 \be
 H^{(N)}=H^{(N)}(g^{(EPN)})+\lambda\,H_{(int)}^{(N)}\,
 \label{[2]}
 \ee
remains almost prohibitively ill-conditioned \cite{ngeq6}.
In what follows a remedy will be sought in perturbation theory
(cf. its outline in our preceding paper
\cite{admissible}). On this background
we will
separate perturbations
$H_{(int)}^{(N)}$ into two subfamilies.
For the subfamily of our present interest
(in which the energy spectra will be real) the parameters
will form a unitarity-compatible corridor.

Working, for the sake of definiteness,
with a multiparametric and real but, otherwise, entirely general $N$
by $N$ matrix Hamiltonians $H^{(N)}$ we shall restrict our study
just to the models lying ``not too far'' from an EP singularity. In
these models, secondly, the non-Hermitian EP degeneracy will be assumed
``maximal'', i.e., $N-$tuple, with $g^{EP}\,\equiv\,g^{EPN}$.
We should emphasize that
these restrictions of the scope of our paper were motivated by the
needs of physics. In particular, we wanted to complement the
open-system results of
Ref.~\cite{Uwe} or the
closed-system results of
Refs.~\cite{maximal}
(exhibiting  already all features of
a quantum phase transition \cite{Denis})
by another family of the less
realistic and less numerical
but
more universal and
more transparent $N$ by $N$ matrix model.

\subsection{Exceptional points and Jordan blocks}

In the EP limit itself (also known as non-Hermitian degeneracy
\cite{Berry}) the Hamiltonian, by definition, ceases to be
diagonalizable. This means that it loses its standard physical
interpretation \cite{Messiah}. At the same time, the study and
understanding of the behavior of quantum systems in the vicinity of
EPs is of paramount descriptive \cite{Uwe} as well as conceptual
\cite{Eva} and practical numerical \cite{ngeq6} relevance and
importance.

Special attention is to be paid to the scenarios
in which we may ignore the role of the EP-unrelated
part of the physical Hilbert space.
This enables us to restrict attention to the
$N$ by $N$ (sub)matrices
$H=H^{(N)}(g)$ of the Hamiltonian,
especially when the
parameter
is able to acquire a maximal,
$N-$th-order exceptional-point value, $g \to g^{(EPN)}$
\cite{maximal}.
In similar cases the EPN limit of the truncated
Hamiltonian is usually
assigned its Jordan-block canonical form,
 $$
 H^{(N)}(g^{(EPN)}) \sim J^{(N)}(E_0)=
 \left[ \begin {array}{ccccc}
                     E_0&1&0&\ldots&0
 \\\noalign{\medskip}0&E_0&1&\ddots&\vdots
 \\\noalign{\medskip}0&\ddots&\ddots&\ddots&0
 \\\noalign{\medskip}\vdots&\ddots&0&E_0&1
 \\\noalign{\medskip}0&\ldots&0&0&E_0
 \end {array} \right]\,.
 $$
A mutual map is defined, in terms of the so called transition matrix
$Q^{(N)}$, by relations
 \be
 H^{(N)}(g^{(EPN)})\, Q^{(N)} = Q^{(N)}\,J^{(N)}(E_0)\,.
 \label{ceases}
 \ee
In quantum physics, such a ``generalized diagonalization'' of the
Hamiltonian offers an efficient tool for analysis in perturbation
theory \cite{admissible}.

\subsection{One-parametric corridor in an exactly solvable example}

As an elementary illustrative example let us recall
the following exactly solvable $N-$state quantum
Hamiltonian of dimension $N=8$,
 \be
 H^{(8)}_{(ES)}(g)=\left[ \begin {array}{cccccccc} 0&-1+\delta&0&0&0&0&0&0
 \\\noalign{\medskip}-1-\delta&0&-1+{\it \gamma}&0&0&0&0&0
  \\\noalign{\medskip}0&-1-{\it \gamma}&0&-1+\beta&0&0&0&0
 \\\noalign{\medskip}0&0&-1-\beta&0&-1+\alpha&0&0&0
 \\\noalign{\medskip}0&0&0&-1-\alpha&0&-1+\beta&0&0
 \\\noalign{\medskip}0&0&0&0&-1-\beta&0&-1+{\it \gamma}&0
 \\\noalign{\medskip}0&0&0&0&0&-1-{\it \gamma}&0&-1+\delta
 \\\noalign{\medskip}0&0&0&0&0&0&-1-\delta&0
 \end {array} \right]\,.
 \label{eses}
 \ee
This matrix is non-Hermitian but ${\cal PT}-$symmetric, where ${\cal
P}$ is parity (i.e., a matrix with units along the second main
diagonal) while the nonlinear operator of transposition ${\cal T}$
mimics the time reversal \cite{BB}. In a standard
decomposition $H = T+V$ of this Hamiltonian the kinetic energy term
$T$ coincides with the conventional discrete Laplacean while the
four-parametric antisymmetric tridiagonal matrix $V$ plays the role
of a weakly non-local interaction.

The EP8 limit of the model is reached at
$\alpha=\beta=\gamma=\delta=1$. The resulting Hamiltonian matrix
with the mere $N-1$ non-vanishing elements $H_{j+1,j}=-2$,
$j=1,2,\ldots,N-1$ may be given its Jordan-block form via
Eq.~(\ref{ceases}) in terms of an antidiagonal transition matrix
with $N$  non-vanishing elements $Q_{N-j+1,j}=(-2)^{j-1}$,
$j=1,2,\ldots,N$.

For the specific $g-$dependence of parameters
 \be
 \alpha=\sqrt{1-a\,g}\,,\ \ \
 \beta=\sqrt{1-b\,g}\,,\ \ \
 \gamma=\sqrt{1-c\,g}\,,\ \ \
 \delta=\sqrt{1-d\,g}\,
 \ee
with a quadruplet of positive constants $ a, b, c$ and $d$ the
spectrum is sampled, in Fig.~\ref{trija}, at $a=2$, $b=1.8$, $c=1.6$
and $d=1.4$. It is all real and discrete at any real $g>0$. With
$g^{(EP8)}_{(ES)}=0$ and with the trivial degenerate energy
$E_0=E_{(ES)}^{(EP8)}=0$, the $g-$dependence of the energies can
even be specified by the remarkable exact formula
$E_n(g)=E_n(1)\,\sqrt{g}$ which immediately follows from the
$g-$dependence of the secular polynomial.

\begin{figure}[h]                    
\begin{center}                         
\epsfig{file=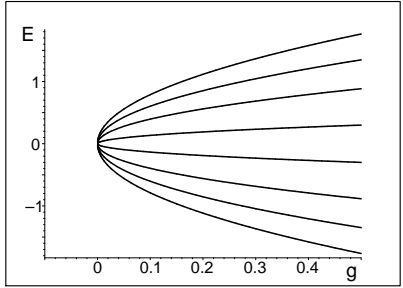,angle=270,width=0.30\textwidth}
\end{center}    
\vspace{2mm} \caption{The sample of degeneracy of the real spectrum
of Hamiltonian $H^{(8)}_{(ES)}(g)$ in the EP8 limit of $g \to 0$ at
$a=2$, $b=1.8$, $c=1.6$ and $d=1.4$ (both $g$ and $E$ are
dimensionless here).
 \label{trija}
 }
\end{figure}

\section{Exceptional points and their bounded perturbations\label{secb}}

The stability of quantum systems with respect to perturbations is
usually studied in the framework of conventional quantum mechanics
in which the Hamiltonians (i.e., the generators of evolution) are
self-adjoint \cite{Messiah}. From this perspective our present study
of manifestly non-Hermitian perturbed Hamiltonians (\ref{[2]})
living in a small vicinity of an EPN singularity represent a true
methodical challenge.

In the first step towards a disentanglement of the problems let us
recall Eq.~(\ref{ceases}) and let us replace the unperturbed
Hamiltonian $H^{(N)}(g^{(EPN)})$ of Eq.~(\ref{[2]}) by its canonical
Jordan form. This yields the fairly general family of the
EPN-related perturbed $N$ by $N$ real-matrix Hamiltonians of our
interest,
 \be
 H^{(N)}=J^{(N)}(0)+\lambda\,V\,.
 \label{illy}
 \ee
In our analysis we shall initially assume that the matrix elements
of the perturbation are uniformly bounded, $V_{i,j}={\cal O}(1)$.
The smallness of the perturbation then becomes controlled by a
single, ``sufficiently small'' positive parameter $\lambda$
tractable as a coupling constant.

\subsection{Exactly solvable $N=2$ model}


Jordan block with $N=2$,
 $$
 J^{(2)}(E_0)= \left[ \begin {array}{cc}
 E_0&1
 \\\noalign{\medskip}0&E_0\end {array}
 \right]
 $$
and with, say, $E_0=0$ can be perceived, in the light of
Eq.~(\ref{ceases}), as a generic representative of an arbitrary
$N=2$ one-parametric Hamiltonian $H^{(2)}(g)$ in its EP2 limit.
Thus, up to a trivial incorporation of transition matrix $Q^{(2)}$
via Eq.~(\ref{ceases}) we may replace any given  unperturbed
Hamiltonian $H^{(2)}(g^{(EP2)})$ in Eq.~(\ref{[2]}) by its canonical
form $J^{(2)}(0)$. Even when adding an arbitrary (and, say, real)
$N=2$ perturbation matrix
 $$
  V=\left[ \begin {array}{cc} \alpha_{{1}}&\mu
  \\\noalign{\medskip}\beta&\alpha_{{2}}\end {array} \right]
 $$
with bounded elements $V_{j,k}={\cal O}(1)$, the exhaustive
construction of all of the bound states remains non-numerical. Its
detailed presentation may be found in section III A. of
\cite{admissible}. For the present methodical
purposes we only need to recall the elementary scaling rule
 \be
 E_\pm^{(2)} =\pm \sqrt{\lambda\,\beta\,}+{\cal O}(\lambda)
 \label{sca2}
 \ee
characterizing the order of magnitude of the complete perturbed
bound-state energy spectrum. This rule immediately follows from
secular equation
 $$
 \det (H-E)=\left[ \begin {array}{cc} \lambda\,\alpha_{{1}}
 -\epsilon\,\sqrt {\lambda}&1+\lambda\,\mu
 \\\noalign{\medskip}\lambda\,\beta&\lambda\,
\alpha_{{2}}-\epsilon\,\sqrt {\lambda}\end {array} \right]=0\,,\ \ \
\ \epsilon=E /\sqrt {\lambda} ={\cal O}(1)
 $$
i.e., from the implicit definition of the spectrum
 $$
  \left( \alpha_{{1}}\alpha_{{2}}-\beta\,\mu \right) {\lambda}^{2}+
 \left( -\alpha_{{1}}\epsilon-\epsilon\,\alpha_{{2}} \right)
 {\lambda}^{3/2}+ \left( -\beta+{\epsilon}^{2} \right) \lambda=0\,.
 $$
The conclusion is that in the leading-order approximation we get the
two real energy roots $E_\pm$ of Eq.~(\ref{sca2}) if and only if
$\beta \geq 0$. In such a broad ``physical'' parametric corridor the
time-evolution of our quantum system remains unitary in a non-empty
interval of small $\lambda \in (0,\lambda_{\max})$. In contrast, the
eigenvalues become purely imaginary whenever $\beta<0$, $\epsilon
\approx \epsilon_\pm=\pm {\rm i}\sqrt{|\beta|}$. In other words, the
vicinity of the EP2 singularity splits into the  ``admissible'',
unitarity-compatible corridor and its ``unphysical'',
unitarity-incompatible complement. Thus, the choice of $\beta>0$
guarantees the existence of a non-empty corridor connecting the
interior of the domain of the stable dynamical regime with its
EP2-supporting boundary.

What remains to be discussed is the behavior of the $N=2$ bound
state energies in the limit $\beta \to 0$. Incidentally, for the
analysis the perturbation approximation approach is not needed. The
eigenvalue formulae $E_{1,2}=\lambda\,\alpha_{1,2}$ become exact at
$\beta=0$. What is new is only an enhancement of their order of
smallness, $E_{1,2}={\cal O}( {\lambda})$. We will see below that
such a rescaling behavior will also reemerge at the larger matrix
dimensions $N>2$.

%

\subsection{Nontrivial model with $N=3$}

The existence of transition matrices $Q^{(3)}$ and the routine
solvability of the EPN-related Eq. (\ref{ceases}) at $N=3$ enable us
to restrict attention, without any loss of generality, to the
perturbed Jordan-block Hamiltonians
 \be
 H^{(3)}(\lambda)=J^{(3)}(0)+\lambda\,V\,.
 \label{mod3}
 \ee
A partial analysis of consequences may already be found described in
section III C. of paper~\cite{admissible}. Unfortunately,
our conclusions in {\it loc. cit.} were negative. In the EP3
vicinity the quantum systems in question appeared non-unitary and
unstable. In the real space of parameters of perturbation $V$ we
failed to localize a unitarity-compatible corridor which would
provide a $\lambda \neq 0$ access to the EP3 singularity in the
limit of $\lambda \to 0$.

In retrospective, the main reason of the failure may be traced back
to the fact that we tried to follow the guidance provided by the
simpler $N=2$ model too closely. The use of the mere
$\lambda-$independent real perturbation matrix with elements
$V_{j,k}={\cal O}(1)$, i.e.,
 \be
  V=\left[ \begin {array}{ccc} \alpha_{{1}}&\mu_{{1}}&\nu
  \\\noalign{\medskip}\beta_{{1}}&\alpha_{{2}}&\mu_{{2}}
  \\\noalign{\medskip}\gamma&\beta_{{2}}&\alpha_{{3}}
 \end {array}
 \right]\,
 \label{pertu3}
 \ee
appeared insufficient. In fact, we only too heavily relied upon the
existence of the specific ``exact'' representation of the $N=3$
spectrum it terms of Cardano formulae.
After all, this strategy led already
to overcomplicated formulae and did not offer any
insight .
Thirdly, in a way guided by
the results at $N=2$ we ``skipped \ldots the discussion of
models with vanishing $\gamma = 0$''~\cite{admissible}.
In other words,
having
restricted our attention to the mere search for a ``broad'' corridor
with $\gamma \neq 0$ we missed the opportunity.
We did not manage to find {\em any\,} reasonable
construction of the corridor of stability at {\em any\,}
non-vanishing $\lambda \neq 0$ in Eq.~(\ref{mod3})
(see the list of the related comments at the end
of section Nr. III
in~\cite{admissible}).

The non-existence of the corridor at $N=3$ and $\gamma
\neq 0$ may be given an elementary proof.
In its outline let us return to ansatz (\ref{pertu3}).
We may rescale the
energies, in a
way recommended in \cite{admissible}, whenever $\gamma\neq 0$,
$E_n=\epsilon_n\,\sqrt[3]{\lambda}$. An implicit definition
of the spectrum is then immediately provided by secular equation
 $$
\det \left[ \begin {array}{ccc}
\lambda\,\alpha_{{1}}-\epsilon\,\sqrt
[3]{\lambda}&1+\lambda\,\mu_{{1}}&\lambda\,\nu\\\noalign{\medskip}\lambda
\,\beta_{{1}}&\lambda\,\alpha_{{2}}-\epsilon\,\sqrt [3]{\lambda}&1+
\lambda\,\mu_{{2}}\\\noalign{\medskip}\lambda\,{\it
\gamma}&\lambda\,\beta _{{2}}&\lambda\,\alpha_{{3}}-\epsilon\,\sqrt
[3]{\lambda}\end {array}
 \right]=0\,.
 $$
Although the resulting
secular polynomial is too long for being printed,
its
leading-order part
is short and yields the final,
explicit closed-form result
 $$
 \epsilon \approx \epsilon_{1,2,3} = \sqrt[3]{\gamma}\,.
 $$
This reconfirms that the {\em whole} spectrum cannot be real (and
the system compatible with unitarity) unless $\gamma=0$.

\section{Construction of the corridors\label{secc}}

In~\cite{admissible} we did not study the case of vanishing
$\gamma=0$ because we found it overcomplicated. Now we shall accept
a different strategy, assuming that the limiting constraint
$\gamma=0$ is only valid in the leading-order approximation in
$\lambda$. In other words we will consider generalized, manifestly
$\lambda-$dependent versions of perturbations.

\subsection{The corridor and its boundaries
at $N=3$}

Transition to
manifestly $\lambda-$dependent real perturbation matrices
 \be
  V=\left[ \begin {array}{ccc} \alpha_{{1}}&\mu_{{1}}&\nu
  \\\noalign{\medskip}\beta_{{1}}&\alpha_{{2}}&\mu_{{2}}
  \\\noalign{\medskip}\gamma&\beta_{{2}}&\alpha_{{3}}
 \end {array}
 \right] + \sqrt{\lambda}\,V' + \ldots\,,
 \ \ \
  V'=\left[ \begin {array}{ccc} \alpha_{{1}}'
  &\mu_{{1}}'&\nu'
  \\\noalign{\medskip}\beta_{{1}}'&\alpha_{{2}}'&\mu_{{2}}'
  \\\noalign{\medskip}\gamma'&\beta_{{2}}'&\alpha_{{3}}'
 \end {array}
 \right]\,,\ \
 \ldots\,
 \label{ameper}
 \ee
is an enrichment of the representation of dynamics at $N=3$. It
immediately leads us to a very natural resolution of the puzzle. Let
us now outline its main technical ingredients. Firstly, at
$\gamma=0$ we have to change the energy scaling:
$E_n=\epsilon_n\,\sqrt{\lambda}$. From the resulting amended secular
equation
 $$
 \det
 \left[ \begin {array}{ccc}
 \lambda\,\alpha_{{1}}-\epsilon\,
 \sqrt {\lambda}&1+\lambda\,\mu_{{1}}&\lambda\,\nu
 \\\noalign{\medskip}\lambda
 \,\beta_{{1}}&\lambda\,\alpha_{{2}}-\epsilon\,\sqrt {\lambda}&1+
 \lambda\,\mu_{{2}}
 \\\noalign{\medskip}{\lambda}^{3/2}{\it \gamma}'&\lambda
 \,\beta_{{2}}&\lambda\,\alpha_{{3}}-\epsilon\,\sqrt {\lambda}
 \end {array} \right]=0
 $$
we are allowed to omit all of the higher-order corrections as
irrelevant. Preserving merely the ${\cal O}({\lambda}^{3/2})$
leading-order part of secular equation
 $$
 \gamma'+\left (\beta_{{1}}+\beta_{{2}}
 \right )\,\epsilon-{\epsilon}^{3}=0
 $$
we only need to reflect the role and influence of the new parameter
$\gamma'$. In a preparatory stage we may try to simplify the task
and to fix, tentatively, $\gamma'=0$. This would yield the two
sample roots $\epsilon_{\pm}=\pm \sqrt{\beta_1+\beta_2}$ which are
both real if and only if $\beta_1+\beta_2 \geq 0$. Thus, relation
$\beta_1+\beta_2 = 0$ seems to offer the first nontrivial
specification of the boundary of the corridor at $\gamma'=0$.
Unfortunately, the property of reality of the third energy root
(which, in the leading-order approximation, vanishes) remains
uncertain. Thus, we have to return to the full-fledged analysis of
the model at $\gamma' \neq 0$. Along these lines we abbreviated
$\beta_{{1}}+\beta_{{2}}=3\varrho^2$ and came to the following $N=3$
result.

\begin{lemma} \label{lejed}
For Hamiltonians (\ref{mod3}) with small $\lambda$ and arbitrary
real perturbations (\ref{ameper}) the energy spectra are real for
parameters inside an EP3-attached corridor such that $\gamma=0$ and
 $
 \gamma'\in (-\varrho^3 ,\varrho^3 )\,
 $.
\end{lemma}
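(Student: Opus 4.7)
The plan is to push the secular-equation analysis one step further than the previous subsection: the condition $\gamma=0$ is carried over from the impossibility argument at $\gamma\neq 0$ (where the leading-order cubic had only one real root $\epsilon=\sqrt[3]{\gamma}$, up to its two complex conjugates), and with this constraint in force the bottom-left matrix element of $\lambda V$ is demoted from order $\lambda$ to order $\lambda^{3/2}$, with coefficient $\gamma'$. The diagonal and remaining perturbed off-diagonal entries are still of order $\lambda$, so the matching rescaling of eigenvalues becomes $E=\epsilon\sqrt{\lambda}$ with $\epsilon={\cal O}(1)$, exactly the scaling already displayed between equations (\ref{ameper}) and the amended secular determinant in the text.

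Substituting this rescaling and expanding the $3\times 3$ determinant, only four of the six permutation contributions survive at the leading order $\lambda^{3/2}$: the product of the three shifted diagonal entries (contributing $-\epsilon^{3}$), the two mixed products that combine one shifted diagonal factor with a unit super-diagonal entry and a subdiagonal factor $\lambda\beta_i$ (each contributing a term linear in $\epsilon$), and the product of both unit super-diagonals with the corner entry $\lambda^{3/2}\gamma'$ (contributing $\gamma'$). All remaining terms are at least ${\cal O}(\lambda^{2})$ and may be dropped at leading order. This reproduces the reduced cubic $\epsilon^{3}-(\beta_{1}+\beta_{2})\epsilon-\gamma'=0$ recorded just above the lemma, and shows explicitly which matrix elements of $V$ and $V'$ decouple from the leading-order eigenvalue problem.

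The reality analysis of this cubic is then elementary. A necessary condition is $\beta_{1}+\beta_{2}\geq 0$, already visible from the sub-case $\gamma'=0$ where the nonzero roots are $\epsilon_{\pm}=\pm\sqrt{\beta_{1}+\beta_{2}}$. Writing $\beta_{1}+\beta_{2}=3\varrho^{2}$ and substituting $\epsilon=\varrho t$ reduces the cubic to $t^{3}-3t=\gamma'/\varrho^{3}$. The map $t\mapsto t^{3}-3t$ has local extrema at $t=\pm 1$, so it takes each value in the associated critical window three times on $\mathbb{R}$; this cuts out an explicit, symmetric open $\gamma'$-interval measured in units of $\varrho^{3}$ and identifies the EP3-attached corridor in the statement of the lemma.

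The one genuinely technical point is that the real-root count above is only the leading-order truth; one must still confirm that the three real roots of the reduced cubic survive the ${\cal O}(\lambda^{2})$ corrections for every sufficiently small $\lambda>0$. On the open interior of the corridor this is routine: the leading roots are simple, so the implicit function theorem applied to the full secular polynomial in $(\epsilon,\lambda)$ produces three analytic real branches $\epsilon(\lambda)$ with the expected limits at $\lambda=0$. The main obstacle is the behavior at the endpoints of the $\gamma'$-interval, where two leading roots coalesce and the implicit function theorem fails; a Newton-polygon or Puiseux-series argument is required there to verify that the coalescing pair transitions from real to complex exactly at the boundary, which is what justifies taking the corridor to be an open interval of the form stated in the lemma.
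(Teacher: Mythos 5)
Your proof is correct and follows essentially the same route as the paper's: both reduce the problem to the leading-order cubic $\gamma'+(\beta_{1}+\beta_{2})\,\epsilon-\epsilon^{3}$ and obtain three real roots by requiring the local extrema at $\epsilon=\pm\varrho$ to straddle zero (your normalized form $t^{3}-3t=\gamma'/\varrho^{3}$ with critical values $\mp 2$ is the same computation, and in fact yields the sharp window $|\gamma'|<2\varrho^{3}$, of which the lemma's stated interval is a subset). Your additional material --- the explicit permutation-by-permutation expansion of the secular determinant and the implicit-function-theorem argument for persistence of the simple real roots at small $\lambda>0$ --- goes beyond what the paper writes down but only reinforces the same argument.
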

\begin{proof}
The graph of the curve
$
y(\epsilon)=
 \gamma'+\left (\beta_{{1}}+\beta_{{2}}
 \right )\,\epsilon-{\epsilon}^{3}
 $
(with zeros equal to the energies) diverges to $\pm \infty$ at large
and positive/negative $\epsilon$, respectively. The
$\gamma'-$independent derivative $ y'(\epsilon)=
 \beta_{{1}}+\beta_{{2}}-3\,{\epsilon}^{2}
 $
has zeros $\epsilon_\pm=\pm \varrho$
which determine the local minimum/maximum of $y(\epsilon)
$. It must be negative/positive, respectively, but
this is guaranteed by our constraint upon $\gamma'$.
\end{proof}


\subsection{Boundaries at $N=4$}

The perturbed Jordan-block Hamiltonians
 \be
 H^{(4)}(\lambda)=J^{(4)}(0)+\lambda\,V\,
 \label{mod4}
 \ee
will be studied here with the following reduced,
ten parametric real
perturbation matrix
 \be
  V=\left[ \begin {array}{cccc} \mu_{{1}}&0&0&0
  \\\noalign{\medskip}\alpha_{{1}}&\mu_{{2}}&0&0
  \\\noalign{\medskip}\beta_{{1}}&\alpha_{{2}}&\mu_{
{3}}&0\\\noalign{\medskip}{\it
\gamma}&\beta_{{2}}&\alpha_{{3}}&\mu_{{4}}
\end {array} \right]\,.
\label{[9b]}
 \ee
Bound state energies  $E_n=\epsilon_n\,\sqrt[4]{\lambda}$ may now be
defined via roots of secular equation
 $$
 \det (H-E)=
 \det \left[ \begin {array}{cccc} \lambda\,\mu_{{1}}
 -\epsilon\,\sqrt [4]{\lambda}&1&0&0
 \\\noalign{\medskip}\lambda\,\alpha_{{1}}&\lambda\,\mu_{
{2}}-\epsilon\,\sqrt [4]{\lambda}&1&0\\\noalign{\medskip}\lambda\,
\beta_{{1}}&\lambda\,\alpha_{{2}}&\lambda\,\mu_{{3}}-\epsilon\,\sqrt
[ 4]{\lambda}&1\\\noalign{\medskip}\lambda\,{\it
\gamma}&\lambda\,\beta_{{2}
}&\lambda\,\alpha_{{3}}&\lambda\,\mu_{{4}}-\epsilon\,\sqrt
[4]{\lambda }\end {array} \right]=0\,.
 $$
In the leading-order approximation this yields the
entirely elementary
quadruplet of solutions
 $$
 \epsilon_n \approx \gamma^{1/4}\,.
 $$
At both signs of non-vanishing real $\gamma$ two of those roots are
purely imaginary so that at arbitrarily small $\gamma\neq 0$ and
$\lambda\neq 0$ the perturbed system becomes non-unitary. In other
words, our quantum system with $\gamma\neq 0$ is unstable and it
does not possess any suitable physical Hilbert space of states ${\cal
H}^{(S})_{(standard)}$. The system must be interpreted as having performed a
phase transition at $\lambda=0$ \cite{BB,Denis}.
At $\gamma\neq 0$ and $\lambda\neq
0$ its ``energy'' $H$ is not an observable anymore.

The essence of the paradox was clarified in our preceding paper
\cite{admissible}. We emphasized there that in the quantum mechanics
of closed systems it only makes sense to consider the ``realizable''
perturbations under which the perturbed Hamiltonian still operates
in a suitable ${\cal H}^{(physical)}$. One has to require that the
``strength'' of the perturbations is ``measured'' in  ${\cal
H}^{(physical)}$ rather than in any of its unitarily non-equivalent,
manifestly unphysical alternatives  ${\cal H}^{(auxiliary)}$ with,
typically, a ``friendlier'' inner product \cite{ali}.

In our present continuation of the EP4-related study of realizable
perturbations let us reopen the search for a stable
corridor in a restricted parametric domain where $\gamma \approx 0$.
With this aim we replace the constant-perturbation ansatz
(\ref{[9b]}) by a more sophisticated, $\lambda-$dependent $N=4$
analogue of Eq.~(\ref{ameper}). Recalling the strategy used at $N=3$
we have to modify also the scaling of the bound state energies and
put $E_n=\epsilon_n\,\sqrt[3]{\lambda}$ in an amended secular
equation
 $$
 \det (H-E)=\det
 \left[ \begin {array}{cccc}
 \lambda\,\mu_{{1}}-\epsilon\,\sqrt [3]{\lambda}&1&0&0
 \\\noalign{\medskip}\lambda\,\alpha_{{1}}&\lambda\,\mu_{
 {2}}-\epsilon\,\sqrt [3]{\lambda}&1&0
 \\\noalign{\medskip}\lambda\,
 \beta_{{1}}&\lambda\,\alpha_{{2}}&\lambda\,\mu_{{3}}-\epsilon\,\sqrt [
 3]{\lambda}&1
 \\\noalign{\medskip}{\lambda}^{4/3}{\it \gamma}'&\lambda\,
 \beta_{{2}}&\lambda\,\alpha_{{3}}&\lambda\,\mu_{{4}}-\epsilon\,\sqrt [
 3]{\lambda}\end {array} \right]=0\,.
 $$
Its leading-order component of order ${\lambda}^{4/3}$ must vanish,
 \be
  {\epsilon}^{4}-\beta_{{1}}\epsilon-\beta_{{2}}\epsilon
 -{\it \gamma}' =0\,.
 \label{sekw}
 \ee
Such an upgrade of secular polynomial has still strictly two or four
complex roots. The evolution of the system remains non-unitary and
unstable unless we set $\beta_{{1}}+\beta_{{2}}\to 0$ and
$\gamma'\to 0$ making all of the roots of the new leading-order
secular equation (\ref{sekw}) vanish as well.

The way out of the difficulty is found in the next-step lowering of
the order of magnitude of all of the coefficients in approximate
Eq.~(\ref{sekw}). In the language of physics this means that we have
to introduce certain {\it ad hoc\,} higher-order perturbations.
Thus, proceeding along the same lines as before we weaken the
dominant components of the perturbation, $\beta_1 \to
\beta_1'\,\sqrt[2]{\lambda}$, $\beta_2 \to
\beta_2'\,\sqrt[2]{\lambda}$ and $\gamma' \to \gamma''\,{\lambda}$.
This induces the change in the scaling of the energies,
$E_n=\epsilon_n\,\sqrt[2]{\lambda}$. The replacements lead to the
following ultimate amendment of Schr\"{o}dinger operator
 \be
 H-E=
 \left[ \begin {array}{cccc}
 \lambda\,\mu_{{1}}-\epsilon\,\sqrt {\lambda}&1&0&0
 \\\noalign{\medskip}\lambda\,\alpha_{{1}}&\lambda\,\mu_{
 {2}}-\epsilon\,\sqrt {\lambda}&1&0
 \\\noalign{\medskip}{\lambda}^{3/2}
 \beta_{{1}}'&\lambda\,\alpha_{{2}}&\lambda\,\mu_{{3}}-\epsilon\,\sqrt {
 \lambda}&1
 \\\noalign{\medskip}{\lambda}^{2}{\it \gamma}''&{\lambda}^{3/2}
 \beta_{{2}}'&\lambda\,\alpha_{{3}}&\lambda\,\mu_{{4}}-\epsilon\,\sqrt {
 \lambda}
 \end {array} \right]\,.
 \label{schrop}
 \ee
Up to the higher-order ${\cal O}({\lambda}^{5/2})$
corrections the exact secular equation $\det (H-E)=0$ degenerates to
the vanishing of the effective secular polynomial,
 \be
 z(\epsilon)=
  -\widetilde{\gamma}
  -\widetilde{\beta}\,\epsilon
  -\widetilde{\alpha}\,\epsilon^2
  +{\epsilon}^{4}=0\,
  \ee
where
 \be
   \widetilde{\gamma}=\gamma'' -\alpha_{1}\alpha_{3}\,,
 \ \ \ \ \
 \widetilde{\beta}=\beta_{1}' +\beta_{2}'\,,
 \ \ \ \ \
 \widetilde{\alpha}=\alpha_{1}+\alpha_{2}+\alpha_{3}\,.
 \label{ar}
 \ee
We arrive at our final answer.


\begin{lemma}
 \label{ledva}
For a sufficiently small $\lambda$ in the hierarchically perturbed
four by four Hamiltonian of Eq.~(\ref{schrop}) the energy spectrum
remains real inside a non-empty EP4-attached corridor of parameters.
\end{lemma}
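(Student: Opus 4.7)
My plan is to imitate the argument that established Lemma~\ref{lejed}: reduce the question of reality of the full perturbed spectrum to a question about the real roots of the effective quartic
$$
 z(\epsilon)=\epsilon^{4}-\widetilde{\alpha}\,\epsilon^{2}-\widetilde{\beta}\,\epsilon-\widetilde{\gamma}
$$
with coefficients given by~(\ref{ar}), exhibit a concrete non-empty open set of $(\widetilde{\alpha},\widetilde{\beta},\widetilde{\gamma})$ on which $z$ has four simple real roots, and then lift this leading-order picture to the exact spectrum by a continuity argument valid uniformly for all sufficiently small $\lambda$.

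For the existence of a reference point I would first specialize to $\widetilde{\beta}=0$, which turns $z$ into a biquadratic. The substitution $u=\epsilon^{2}$ then yields $u_{\pm}=\tfrac{1}{2}\bigl(\widetilde{\alpha}\pm\sqrt{\widetilde{\alpha}^{2}+4\widetilde{\gamma}}\bigr)$, and the two $u$-roots are simultaneously strictly positive precisely when $\widetilde{\alpha}>0$ together with $-\widetilde{\alpha}^{2}/4<\widetilde{\gamma}<0$. Both inequalities cut out a non-empty two-dimensional slab of parameters at which $z$ has four distinct real roots. Because ``all four roots of a real quartic are real and simple'' is an open condition on the coefficients (the discriminant is non-zero and the roots depend continuously on the coefficients), this slab enlarges to a non-empty open subset $\mathcal{S}_{0}$ of the full three-dimensional coefficient space. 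Translating through~(\ref{ar}) back to the rescaled perturbation data $\alpha_{i},\beta_{i}',\gamma''$ yields the claimed open EP4-attached corridor in the space of hierarchically admissible perturbations.

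The second step promotes this leading-order picture to the true spectrum. After factoring out the common power of $\lambda$ prescribed by the scaling $E=\epsilon\sqrt{\lambda}$, the exact secular equation $\det(H-E)=0$ obtained from~(\ref{schrop}) assumes the form $z(\epsilon)+\sqrt{\lambda}\,r(\epsilon;\lambda)=0$, where $r$ is polynomial in $\epsilon$ with coefficients uniformly bounded as $\lambda\downarrow 0$. Since every root of $z$ on $\mathcal{S}_{0}$ is real and simple, an application of the implicit function theorem at each root (or, equivalently, Rouch\'{e}'s theorem applied to a small real interval around each root) guarantees that every such root persists as a nearby real simple root of the full secular equation for all $\lambda\in(0,\lambda_{\max})$ with some $\lambda_{\max}>0$ depending only on the chosen compact subdomain of $\mathcal{S}_{0}$. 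This is exactly the assertion of the lemma.

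The main obstacle I anticipate is the careful $\lambda$-accounting: one has to verify that the explicit multi-scale weights prescribed for the entries of~(\ref{schrop}) actually force every sub-leading contribution to the determinant to be of order at most $\mathcal{O}(\lambda^{5/2})$ relative to the $\mathcal{O}(\lambda^{2})$ leading term, so that the remainder $r(\epsilon;\lambda)$ really is uniformly small on the compact $\epsilon$-range of the four roots. A secondary technical point is to stay in a relatively compact interior subset of $\mathcal{S}_{0}$, both to keep the roots well-separated from each other and from infinity and to rule out any spurious root escaping the $E=\mathcal{O}(\sqrt{\lambda})$ scale; with these precautions the perturbative passage from $z$ to the exact spectrum is routine.
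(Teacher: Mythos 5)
Your argument is correct, but it reaches the key conclusion by a different route than the paper. Both proofs reduce the lemma to showing that the effective quartic $z(\epsilon)=\epsilon^{4}-\widetilde{\alpha}\,\epsilon^{2}-\widetilde{\beta}\,\epsilon-\widetilde{\gamma}$ has four real roots on a non-empty set of coefficients. The paper does this by a critical-point analysis: it observes that the derivative $z'(\epsilon)$ is (up to rescaling) the cubic already analyzed in the proof of Lemma~\ref{lejed}, so that for $\widetilde{\alpha}>0$ and $\widetilde{\beta}$ in a suitable interval the quartic has three non-degenerate real extremes in a min--max--min configuration, and then it tunes the vertical shift $\widetilde{\gamma}$ last so that the local maximum is positive and both minima are negative. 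You instead exhibit an explicitly solvable reference point ($\widetilde{\beta}=0$, the biquadratic, with the correct conditions $\widetilde{\alpha}>0$, $-\widetilde{\alpha}^{2}/4<\widetilde{\gamma}<0$ for four distinct real roots) and invoke openness of the simple-all-real-roots condition. Your route is more elementary and self-contained, but it yields less information about the actual extent of the corridor: the paper's ordering of the parameters ($\widetilde{\alpha}$, then $\widetilde{\beta}$, then $\widetilde{\gamma}$) produces, for each admissible $(\widetilde{\alpha},\widetilde{\beta})$, the full interval of admissible $\widetilde{\gamma}$, which is what the paper means by ``boundaries of the corridor'' and what it iterates at $N=5$. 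On the other hand, your second step --- the persistence of the simple real roots of $z$ under the $\mathcal{O}(\sqrt{\lambda})$ remainder via the implicit function theorem or Rouch\'{e} --- is a genuine addition: the paper simply asserts that the exact secular equation ``degenerates to'' the effective one up to $\mathcal{O}(\lambda^{5/2})$ and argues purely at leading order, so your version closes a gap the paper leaves implicit. The $\lambda$-accounting you flag as the remaining obstacle is indeed routine for the multiscale weights of Eq.~(\ref{schrop}) and is exactly the point the paper takes for granted.
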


\begin{proof}
The graph of the left-hand-side function $z(\epsilon)$ of
Eq.~(\ref{ar}) (with its four zeros equal to the energies) has its
three real extremes localized at the zeros of its
$\widetilde{\gamma}-$independent derivative $ z'(\epsilon)=
 -\widetilde{\beta}-2\,\widetilde{\alpha}
 \epsilon+4\,{\epsilon}^{3}
 $.
In the proof of Lemma \ref{lejed} we saw that the latter triplet of
zeros $\xi_{0,\pm}$ was real for $\widetilde{\alpha}=3\varrho^2>0$
and $\widetilde{\beta} \in (-\varrho^3,\varrho^3)$. Thus, up to the
parameters at the endpoints of these constraints the zeros
$\xi_{0,\pm}$ (i.e., the coordinates of the local extremes of
$z(\epsilon)$) are real and non-degenerate. Thus, the local maximum
of $z(\epsilon)$ is sharply larger than both of the local minima,
$z(\xi_{0})> \max z(\xi_{\pm}) $. As a consequence, the interval of
variability of our last free parameter $\widetilde{\gamma}$
guaranteeing that $z(\xi_{0})> 0$ while $\max z(\xi_{\pm})<0$ is
non-empty.
\end{proof}

\section{Corridors at arbitrary $N$\label{secd}}

The form of $H$ in Eq.~(\ref{schrop}) is instructive in revealing a
general hierarchy of relevance of the individual matrix elements of
$V$ under the natural phenomenological requirement of the
preservation of the unitarity of the evolution. The pattern can
tentatively be extrapolated to the higher matrix dimensions $N$
with, in particular, $E=\epsilon\,\sqrt[2]{\lambda}$ in the $N=5$
Schr\"{o}dinger operator
 $$
 H-E=\left[ \begin {array}{ccccc}
  \lambda\,\nu_{{1}}-\epsilon\,\sqrt {\lambda}&1&0&0&0
  \\\noalign{\medskip}\lambda\,\mu_{{1}}&\lambda\,\nu_{{
2}}-\epsilon\,\sqrt
{\lambda}&1&0&0\\\noalign{\medskip}{\lambda}^{3/2}
\alpha_{{1}}'&\lambda\,\mu_{{2}}&\lambda\,\nu_{{3}}
-\epsilon\,\sqrt{\lambda}&1&0
\\\noalign{\medskip}{\lambda}^{2}\beta_{{1}}''&{\lambda}^{3/
2}\alpha_{{2}}'&\lambda\,\mu_{{3}}&\lambda\,\nu_{{4}}-\epsilon\,\sqrt
{ \lambda}&1
\\\noalign{\medskip}{\lambda}^{5/2}{\it
\gamma}'''&{\lambda}^{2}
\beta_{{2}}''&{\lambda}^{3/2}\alpha_{{3}}'&\lambda\,\mu_{{4}}&\lambda\,
\nu_{{5}}-\epsilon\,\sqrt {\lambda}\end {array} \right]
 $$
etc (see also the illustrative explicit rederivation of such a form
of the corridor-compatible matrix in subsection \ref{pard} below).

\subsection{Extrapolation pattern}

We saw that at $N=2$, $N=3$ and $N=4$ the perturbation-expansion
construction of the energy spectrum near the Jordan-block extreme
$H^{(N)}(g^{(EPN)})$ was straightforward. The same technique can
equally well be applied at any larger matrix dimension $N$. Our
specific additional physical requirement of the reality of the
spectrum (i.e., of the unitarity of the time evolution of the
quantum systems in question) has been found to be satisfied inside a
specific non-empty domain which we called corridor to EPN. We also
saw that at $N=2$, $N=3$ and $N=4$ the corridor can be defined by
certain very specific choice of $\lambda-$dependent perturbations
$\lambda\,V(\lambda)$ in which the matrix elements are of {\em
unequal\,} orders of smallness. The pattern appeared amenable to a
rigorous extrapolation beyond $N=4$.

\begin{thm}
At any $N=2,3,\ldots$ and for all sufficiently small $\lambda>0$ the
reality of the bound-state spectrum of energies
$E_n=\epsilon_n\,\sqrt[2]{\lambda}$ with ${\epsilon}_n={\cal O}(1)$
can be guaranteed by an appropriate choice of parameters
${\mu}_{jk}={\cal O}(1)$ in the real $N$ by $N$ matrix Hamiltonian
$H=J^{(N)}(0)+\lambda\,V$ with
 \be
 \lambda\,V=\left[ \begin {array}{cccccc}
  0&0&\ldots&0&0&0
  \\\noalign{\medskip}\lambda\,{\mu}_{{21}}&0&\ldots&0&0&0
  \\\noalign{\medskip}{\lambda}^{3/2}
  \,{\mu}_{{31}}&\lambda\,{\mu}_{{32}}&\ddots&\vdots&\vdots&0
  \\\noalign{\medskip}{\lambda}^{2}{\mu}_{{41}}&{\lambda}^{3/2}
  \,{\mu}_{{42}}
  &\ddots&0&0&0
  \\\noalign{\medskip}\vdots&\vdots&\ddots&\lambda\,{\mu}_{{N-1N-2}}&0&0
  \\\noalign{\medskip}{\lambda}^{N/2}{{\mu}}_{N1}&
 {\lambda}^{(N-1)/2}{\mu}_{{N2}}&\ldots&\lambda^{3/2}
 \,{\mu}_{{NN-2}}&\lambda\,{\mu}_{{NN-1}}&0
 \end {array} \right]\,.
 \label{uho}
 \ee
 \label{theofil}
\end{thm}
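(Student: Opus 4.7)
The plan is to reduce the $N$-dimensional secular determinant $\det(H-E)=0$ to a $\lambda$-independent polynomial identity by means of a diagonal similarity transformation. Setting $E=\epsilon\sqrt{\lambda}$, I would introduce the diagonal matrix $D=\mathrm{diag}(1,\lambda^{1/2},\lambda,\ldots,\lambda^{(N-1)/2})$, so that each entry of $D^{-1}(H-E)D$ acquires the prefactor $\lambda^{(j-i)/2}$ at position $(i,j)$. The diagonal term $-\epsilon\sqrt{\lambda}$ is unchanged; the superdiagonal $1$ of $J^{(N)}(0)$ is rescaled to $\sqrt{\lambda}$; and the hierarchically scaled subdiagonal entry $\lambda^{(k+1)/2}\mu_{j+k,j}$ at position $(j+k,j)$ becomes $\lambda^{-k/2}\cdot\lambda^{(k+1)/2}\mu_{j+k,j}=\sqrt{\lambda}\,\mu_{j+k,j}$. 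Every non-vanishing entry therefore carries a common factor $\sqrt{\lambda}$, so $D^{-1}(H-E)D=\sqrt{\lambda}\,\tilde M(\epsilon;\mu)$ for a $\lambda$-independent lower-Hessenberg matrix $\tilde M$ with diagonal $-\epsilon$, superdiagonal $1$, and below-diagonal entries $\mu_{j+k,j}$. Invariance of the determinant under conjugation then turns the secular equation into the purely algebraic identity $\det\tilde M(\epsilon;\mu)=0$, a polynomial of degree $N$ in $\epsilon$.

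The next step is to show that the coefficients of $\det\tilde M(\epsilon;\mu)$ can be adjusted to produce $N$ simple real roots. The simplest realization sets $\mu_{jk}=0$ everywhere except in the last row, making $\tilde M$ (up to a sign) a standard companion matrix. A cofactor expansion along the first column then yields $(-1)^{N}\det\tilde M(\epsilon)=\epsilon^{N}-\mu_{N,N-1}\epsilon^{N-2}-\mu_{N,N-2}\epsilon^{N-3}-\cdots-\mu_{N,1}$, in which the coefficient of $\epsilon^{N-1}$ is forced to zero by the trace-free structure, while the remaining $N-1$ coefficients are in bijection with the free parameters $\mu_{N,k}$. Hence one may realize any trace-free monic polynomial of degree $N$; picking it to be $\prod_{j=1}^{N}(\epsilon-c_{j})$ for any distinct real constants $c_{j}$ with $\sum_{j}c_{j}=0$ delivers $N$ simple real eigenvalues $\epsilon_{n}=c_{n}=\mathcal{O}(1)$ and exhibits a non-empty point of the desired corridor at the leading order.

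Finally I would promote this leading-order statement to the full claim by a standard simple-root perturbation argument. Any higher-order corrections admitted by the hypothesis $\lambda V_{j+k,j}(\lambda)=\mathcal{O}(\lambda^{(k+1)/2})$---for example a small diagonal piece of $V$, or genuine $\lambda$-dependence of the $\mu_{jk}$---contribute $\mathcal{O}(\sqrt{\lambda})$ corrections to $\tilde M$ and hence an $\mathcal{O}(\sqrt{\lambda})$ coefficient-perturbation of the polynomial $\det\tilde M(\epsilon)$. Simplicity of the roots $c_{n}$ together with the implicit function theorem guarantees that for every sufficiently small $\lambda>0$ the perturbed equation retains $N$ simple real roots $\epsilon_{n}(\lambda)$ lying in a neighbourhood of the $c_{n}$, so that $E_{n}=\epsilon_{n}(\lambda)\sqrt{\lambda}$ is real and of the required order. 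I expect the only real obstacle to be careful bookkeeping: verifying that the similarity transformation cancels every power of $\lambda$ in the exact hierarchy $(k+1)/2$---which is precisely why those exponents were engineered in (\ref{uho})---and that the $\mathcal{O}(\sqrt{\lambda})$ remainder is uniform on any bounded set of $\mu$-parameters. The algebraic heart of the argument, namely companion-matrix realizability combined with simple-root perturbation, is entirely routine.
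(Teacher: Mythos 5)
Your proposal is correct, and it actually supplies the details that the paper's own one-line proof leaves implicit. The paper argues by extracting the leading-order part of the secular determinant (extrapolating the pattern verified explicitly at $N=2,3,4,5$) and then appealing to the ``free variability of the coefficients'' of the resulting degree-$N$ polynomial in $\epsilon$. Your diagonal conjugation $D=\mathrm{diag}(1,\lambda^{1/2},\ldots,\lambda^{(N-1)/2})$ is a cleaner device: it shows that with constant $\mu_{jk}$ the identity $\det(H-\epsilon\sqrt{\lambda}\,I)=\lambda^{N/2}\det\tilde M(\epsilon;\mu)$ is \emph{exact}, not merely leading-order, so the rescaled spectrum is literally $\lambda$-independent and no remainder estimate is needed for the theorem as stated (indeed your choice works for all $\lambda>0$, not just small $\lambda$). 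Your companion-matrix specialization (only the last row of $\mu$'s nonzero) is likewise a concrete and verifiable substitute for the paper's appeal to coefficient independence --- one can check it against the paper's explicit $N=3$ and $N=5$ secular polynomials, where e.g.\ the constant term $\gamma-\alpha_1\mu_4-\mu_1\alpha_3$ reduces to $\gamma$ in your gauge --- and it immediately realizes any trace-free monic polynomial, hence any set of distinct real $\epsilon_n$ summing to zero. The closing implicit-function-theorem paragraph is not needed for the existential claim of the theorem itself, but it is a welcome bonus: it establishes that the admissible parameter set is open, i.e.\ a genuine ``corridor'' rather than a single point, which is the paper's broader thesis. The only minor caveat is that your robustness step silently assumes the unperturbed roots are chosen simple (which your construction ensures) and that the $\mathcal{O}(\sqrt{\lambda})$ remainder is uniform on compacta in the $\mu$'s --- both true and, as you say, routine.
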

\begin{proof}
Once we guessed the appropriate $\lambda-$dependence of the general
Schr\"{o}dinger operator it is entirely straightforward to deduce
the general leading-order part of the secular determinant, and to
recall the independence and the free variability of the coefficients
in the secular polynomial.
\end{proof}

\subsection{The boundaries of corridor at $N=5$\label{pard}}



Let us start from the naive ten-parametric constant-matrix
perturbation
 $$
  V=\left[ \begin {array}{ccccc} 0&0&0&0&0\\\noalign{\medskip}\mu_{{1}}&0&0&0&0
  \\\noalign{\medskip}\alpha_{{1}}&\mu_{{2}}&0&0&0
\\\noalign{\medskip}\beta_{{1}}&\alpha_{{2}}&\mu_{{3}}&0&0
\\\noalign{\medskip}{\it \gamma}&\beta_{{2}}&\alpha_{{3}}&\mu_{{4}}&0
\end {array} \right]
 $$
and from the unperturbed Jordan-block matrix
 $
 H_0=J^{(5)}(0)
 $.
Schr\"{o}dinger operator with $E=\epsilon\,\sqrt[5]{\lambda}$ then
reads
 $$
 H-E=\left[ \begin {array}{ccccc} -\epsilon\,\sqrt [5]{\lambda}&1&0&0&0
 \\\noalign{\medskip}\lambda\,\mu_{{1}}&-\epsilon\,\sqrt [5]{\lambda}&1
&0&0\\\noalign{\medskip}\lambda\,\alpha_{{1}}&\lambda\,\mu_{{2}}&-
\epsilon\,\sqrt
[5]{\lambda}&1&0\\\noalign{\medskip}\lambda\,\beta_{{1
}}&\lambda\,\alpha_{{2}}&\lambda\,\mu_{{3}}&-\epsilon\,\sqrt [5]{
\lambda}&1\\\noalign{\medskip}\lambda\,{\it
\gamma}&\lambda\,\beta_{{2}}&
\lambda\,\alpha_{{3}}&\lambda\,\mu_{{4}}&-\epsilon\,\sqrt
[5]{\lambda}
\end {array} \right]\,.
 $$
After we reduce the secular polynomial to its dominant part we get
the five elementary energy roots
 $
 \epsilon \approx \gamma^{1/5}
 $.
Such a spectrum cannot be all real unless $\gamma=0$. This confirms
the necessity of diminishing the matrix element of perturbation in
its left lower corner, $\gamma \to \gamma'\,\sqrt[4]{\lambda}$ (we
may and will drop the primes). This forces us to change,
consistently, the scale of $E=\epsilon\,\sqrt[4]{\lambda}$. The
resulting new effective (i.e., leading-order) secular equation
 $
 \left( -{\epsilon}^{5}+\beta_{{1}}\epsilon
 +{\it \gamma}+\epsilon\,\beta_{{2}} \right){\lambda}^{5/4}=0
 $
is now found to lead, in nontrivial case, to at least two complex,
non-real energy roots. In the same corner of perturbation matrix as
above we have to diminish, therefore, the relevant matrix elements
again. Once we do so and once we drop the primes in $\beta_j \to
\beta_l'\,\sqrt[3]{\lambda}$, $\gamma' \to
\gamma''\,\sqrt[3]{\lambda^2}$ and $E=\epsilon'\,\sqrt[3]{\lambda}$
we get the following tentative amendment of Schr\"{o}dinger operator
 $$
 H-E=\left[ \begin {array}{ccccc} -\epsilon\,\sqrt [3]{\lambda}&1&0&0&0
 \\\noalign{\medskip}\lambda\,\mu_{{1}}&-\epsilon\,\sqrt [3]{\lambda}&1
&0&0\\\noalign{\medskip}\lambda\,\alpha_{{1}}&\lambda\,\mu_{{2}}&-
\epsilon\,\sqrt [3]{\lambda}&1&0\\\noalign{\medskip}{\lambda}^{4/3}
\beta_{{1}}&\lambda\,\alpha_{{2}}&\lambda\,\mu_{{3}}&-\epsilon\,\sqrt
[3]{\lambda}&1\\\noalign{\medskip}{\lambda}^{5/3}{\it
\gamma}&{\lambda}^{4
/3}\beta_{{2}}&\lambda\,\alpha_{{3}}&\lambda\,\mu_{{4}}&-\epsilon\,
\sqrt [3]{\lambda}\end {array} \right]\,.
 $$
Recycling the abbreviations of Eq.~(\ref{ar}) the dominant part of
the new effective secular equation acquires the explicit
three-parametric form
 \be
  -\widehat{\gamma}
  -\widetilde{\beta}\,\epsilon
  -\widetilde{\alpha}\,\epsilon^2
  +{\epsilon}^{5}=0\,,
  \ \ \ \
 \widehat{{\gamma}}=\gamma''\,.
 \label{arr}
 \ee
Its roots still cannot be all real unless they vanish in the given
order of precision. Making now the story short and iterating the
procedure once more we arrive, at last, at the ultimate hierarchized
and corridor-supporting perturbation matrix as given by Theorem
\ref{theofil},
 $$
 V=\left[ \begin {array}{ccccc} 0&0&0&0&0
 \\\noalign{\medskip}\mu_{{1}}&0&0&0&0
 \\\noalign{\medskip}\sqrt {\lambda}\,\alpha_{{1}}&
 \mu_{{2}}&0&0&0\\\noalign{\medskip}\lambda\,
 \beta_{{1}}&\sqrt {\lambda}\,\alpha_{{2}}&\mu_{{3}}&0&0
 \\\noalign{\medskip}{\lambda}\,\sqrt {\lambda}\,{\it \gamma}&
 \lambda\,\beta_{{2}}&\sqrt {\lambda}\,\alpha_{{3}}&
 \mu_{{4}}&0
 \end {array} \right]\,.
 $$
The effective $O({\lambda}^{5/2})$ part of the secular determinant
$\det (H-E)$ leads now to the explicit form of the polynomial
secular equation
 \ben
 -{\epsilon}^{5}+ \left( \mu_{{2}}+\mu_{{1}}+\mu_{{4}}+\mu_{{3}}
 \right) {\epsilon}^{3}+ \left( \alpha_{{1}}+\alpha_{{2}}+\alpha_{{3}}
 \right) {\epsilon}^{2}+
 \een
 \be
 +\left( \beta_{{1}}-\mu_{{2}}\mu_{{4}}-\mu_{{1
}}\mu_{{3}}+\beta_{{2}}-\mu_{{1}}\mu_{{4}} \right)
\epsilon-\alpha_{{1 }}\mu_{{4}}+{\it
\gamma}-\mu_{{1}}\alpha_{{3}}=0\,.
 \label{eqfive}
 \ee
It has the ultimate four-parametric flexibility as required. The
non-empty unitarity-preserving corridor to the $\lambda=0$ EP5
vertex does exist, with the leading-order boundaries prescribed, in
implicit but still user-friendly manner, by Eq.~(\ref{eqfive}).

\section{Discussion\label{discussion}}

The recent successful localizations of
the EP singularities
in various experimental setups
revealed
a perceivable increase of
their relevance
in applied physics as well as in the
quantum physics of resonant and unstable open systems \cite{Nimrod}.
In the quantum theory of stable systems
the role of EP singularities used to be traditionally restricted
to their purely mathematical role
of an obstruction of convergence
in perturbation theory \cite{Kato}.
Such a situation was only slowly improving with
the
emergence of the
first realistic models in relativistic quantum mechanics
where the EP marks an onset of instability \cite{aliKG}.
An analogous phenomenological phase-transition
interpretation was then also assigned to the EPs in many
other unitary quantum systems
\cite{BB,Denis,4a5}.

In a conventional perspective
these innovations
seem to contradict the well
known Stone theorem \cite{Stone}.
Due to this theorem any unitary evolution
(say, in
${\cal H}^{(S)}_{(standard)}$)
must
necessarily be generated by a
Hamiltonian which is selfadjoint
(naturally, in the same Hilbert space
${\cal H}^{(S)}_{(standard)}$).
From this point of view
the innovation of quantum theory
of unitary systems may be presented and advocated
in two ways.
Firstly, in an abstract manner,
as a purely
technical simplification of the physical inner product,
i.e., as a reduction of our standard physical Hilbert space
into its auxiliary partner,
i.e., as a replacement
 ${\cal H}^{(S)}_{(standard)}\to
{\cal H}^{(F)}_{(friendly)}$
leading to a
friendlier mathematics.
Secondly, alternatively,
in an opposite direction and in a very concrete spirit,
one picks up an auxiliary Hilbert space
first of all, Then one
replaces its unphysical but user-friendly
inner product
by a less friendly but correct physical amendment.

This is the most common formulation of the recipe.
In practice,
what is then required is just
the Hamiltonian-dependent construction
of the Hamiltonian-Hermitizing metric operator $\Theta$.
Naturally,  the existence of such a metric
requires the reality of the spectrum; there is no
consistent (unitary) quantum theory without such a constraint
\cite{Geyer}.
Once the spectrum is shown real,
we map ${\cal H}^{(F)}_{(friendly)}
\to {\cal H}^{(S)}_{(standard)}$
and convert
the initial, ``friendly but false'' Hilbert space with ``natural''
metric $\Theta^{(false)}=I$
into its model-dependent physical amendment with
metric $\Theta^{(standard)} \neq I$.

In the present continuation of the related considerations
in Ref.~\cite{admissible} we were able to explain
that the conjecture of the non-existence of an
``admissible'' access corridor to the EP3 limit only meant the
non-existence of a ``broad'' corridor (which we found to exist at
$N=2$ but not at $N=3$). We came now with a corrigendum:
The corridors of a
stable access to the EPN extremes do exist. The only
constraint is that
they are ``narrow'' in the sense of Theorem \ref{theofil}.

The reason of the
non-existence of a ``broad'' corridor at $N\geq 3$
has been shown here to lie in the fact that
at least some of the elements of
the class of perturbations which are only required
bounded in the auxiliary Hilbert space
${\cal H}^{(F)}_{(friendly)}$ may happen to be
too large in
${\cal H}^{(S)}_{(standard)}$.
Then, they can
move the system out of a given
(or, better, out of any eligible)
physical Hilbert space of course.
For this reason, the  perturbations
which are merely bounded
in the auxiliary space
${\cal H}^{(F)}_{(friendly)}$
become a purely formal construct because
they are only small with respect to
a phenomenologically irrelevant
metric $\Theta^{(false)}=I$.
Even without an explicit reference to the metric
we have shown that
at any dimension $N$ and at any, {\em arbitrarily small\,}
but nonvanishing $V^{(N)}_{N,1}={\cal O}(1)$ and $\lambda>0$
the perturbed Hamiltonians
(\ref{illy})
{\em cannot\,} be assigned {\em any\,}
physical meaning or experimental realization.

In the second, main step of our considerations we inverted the
ordering of questions.
In the light of our main interest in the system's stability
we decided to search
for a consistent, ``admissible'' subset of perturbations
$\lambda\,V$ which would still keep the perturbed Hamiltonian
compatible with the quantum  theory of reviews \cite{Carl,ali}.
We felt encouraged by a preparatory analysis of our
one-parametric illustrative model (\ref{eses})
which appeared easily converted into its canonical Jordan-block form.
Having
used these blocks as certain strong-coupling
EP-related unperturbed Hamiltonians
we were then able to leave the elementary model and to
extend the scope
of our considerations to the entirely general $N$ by $N$
real-matrix class
of perturbations $\lambda\,V^{(N)}$.

We may summarize that
we managed to
specify
the structure of admissible,
observability non-violating perturbation matrices
$V^{(N)} =V^{(N)}(\lambda)$ at all $N$.
Besides
the proof of existence
we also described
the method of an (implicit)
determination
of the leading-order boundaries of the
unitarity-compatible corridors ${\cal S}$
in the Euclidean real space of the
variable matrix elements of $V^{(N)}(\lambda)$.
Inside
these
domains of ``admissible'' parameters
the evolution
remains unitary.
We may conclude that in a way contradicting the scepticism
of conclusions based on the
constructions of the
pseudospectra \cite{Trefethen,Viola}
and/or of the ``broad'' corridors \cite{admissible},
the quantum systems in question
remain stable and closed inside corridors ${\cal S}$
which may be called ``narrow''.

\newpage


\end{document}